\newtheorem{theorem}{Theorem}
\newtheorem{lemma}{Lemma}
\newtheorem{corollary}{Corollary}
\theoremstyle{definition}
\newtheorem{definition}{Definition}
\newcommand{\be}{\begin{equation}}
\newcommand{\ee}{\end{equation}}
\newcommand{\ben}{\begin{eqnarray}}
\newcommand{\een}{\end{eqnarray}}
\newcommand{\bes}{\begin{subequations}}
\newcommand{\ees}{\end{subequations}}
\newcommand{\bF}{\begin{figure}}
\newcommand{\eF}{\end{figure}}
\DeclareMathAlphabet{\pazocal}{OMS}{zplm}{m}{n}
\newcommand{\orcid}[1]{\href{https://orcid.org/#1}{\includegraphics[height = 2ex]{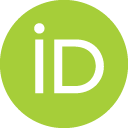}}}
\begin{document}

\title{Extensively Not P-Bi-Immune promiseBQP-Complete Languages}

\author{Andrew Jackson \orcid{0000-0002-598one-one604}}
\affiliation{Department of Physics, University of Warwick, Coventry CV4 7AL, United Kingdom}
\date{\today}


\begin{abstract}
    In this paper, I first establish - via methods other than the Gottesman-Knill theorem - the existence of an infinite set of instances of simulating a quantum circuit to decide a decision problem that can be simulated classically. I then examine under what restrictions on quantum circuits the existence of infinitely many classically simulable instances persists.
    There turns out to be a vast number of such restrictions, and any combination of those found can be applied at the same time without eliminating the infinite set of classically simulable instances.
    Further analysis of the tools used in this then shows there exists a language that every (promise) BQP language is one-one reducible to. This language is also not P-bi-immune under very many promises.
\end{abstract}
\maketitle
\section{Introduction}
\subsection{Background}
The relationship between BQP~\cite{doi:10.1137/S0097539796300921} and P is perhaps the central question in quantum computer science. With billions being invested in quantum computing~\cite{Bogobowicz_Gao_Masiowski_Mohr_Soller_Zemmel_Zesko_2023} around the world~\cite[Fig. 2]{whitePaper}, in the expectation of a large payoff~\cite{Bova2021}, ``are we wasting our time?" becomes a vital question. Because if $BQP = P$, then there is no advantage to developing quantum computers. However, deciding if this is the case is, unfortunately, an immensely difficult problem. Great efforts have been directed towards a solution to this problem~\cite{doi:10.1137/S0097539796300921, doi:10.1137/S0097539795293639, 10.1145/1806689.1806711, classSimIsPolyCollapse, 10.1145/1993636.1993682}, and while some progress has been made (most notably the existence of an oracle~\footnote{Although oracle results can be misleading~\cite{10.1145/146585.146609, CHANG199424}} relative to which BQP $\not \subseteq$ PH~\cite{10.1145/3313276.3316315} and similarly there are oracles relative to which BQP $\not =$ BPP~\cite{doi:10.1137/S0097539796300921, doi:10.1137/S0097539796298637, doi:10.1137/S0036144598347011}), a definitive answer to the question seems some way off\footnote{Especially as classical equivalents of quantum algorithms are occasionally found\cite{10.1145/3313276.3316310, Tang2022, PhysRevA.107.012421, gall2023robust}, showing that we do not perfectly understand the source of quantum advantage.}.

Given we cannot, in this paper or the near-term, determine the exact relationship between P and BQP, I instead target more easily determined properties of the relationship. Herein, I focus mostly on the question of P-bi-immunity (defined formally in Def.~\ref{def:definingBiImmunity}) in BQP-complete languages. I first establish the existence of an infinite set of instances of simulating a quantum circuit to decide a decision problem that can also be simulated classically.

This property of BQP is independent of the exact relation of BQP with P (i.e. resolving this question is not known to resolve if BQP = P) but is still a worthwhile result to establish.
Although not previously directly addressed, this question touches on the question of what conditions can we impose upon a circuit to make it classically simulable. The most celebrated instance is when we demand the circuit be Clifford\footnote{I.e. For every gate, $g$, in the circuit, for any string of Pauli gates, $P$, $gP g^{-1}$ is a string of Pauli gates.}, in which case the circuit is classically simulable~\cite{gottesman1998heisenberg, Aaronson_2004, Anders_2006}. This work is perhaps best seen as an expansion of this result, but in the abstract. then I show a vast and diverse array of restrictions we can place on the quantum circuits under which the existence of infinitely many classically simulable instances persists.
Notably, any combination of these restrictions can be applied simultaneously without eliminating the infinite set of classically simulable instances.
I additionally show that there is a one-one reduction to the problem of circuit simulation where the above results apply from every (promise) BQP language.


\subsection{Initial Definitions}
\subsubsection{Basic Objects and Operations}
Before diving into the main technical results of this paper, I first introduce the main objects and concepts studied herein. These objects form the very basics of computer science and are the most basic objects that theoretical computer science can be said to ``be about."
\begin{definition}
    An \underline{alphabet} is a finite set of distinct symbols
\end{definition}
Throughout this paper, alphabets will generally be denoted by $\Sigma$.

\begin{definition}
    Let $\Sigma$ be an alphabet, then $\Sigma^*$ is the set of all finite ordered sets, $x$, such that every element in $x$ is also in $\Sigma$.
    The elements in $x$ need not be distinct.\\
    $\Sigma^*$ is the result of the \underline{Kleene star operation} on $\Sigma$.
\end{definition}

\begin{definition}
    x is a \underline{word} over $\Sigma$ (an alphabet) $\iff$ $x \in \Sigma^*$
\end{definition}

\begin{definition}
    $\mathcal{L}$ is a \underline{language over $\Sigma$} $\iff$ $\mathcal{L} \in \mathcal{P}\big( \Sigma^* \big)$\\
    Where $\mathcal{P}\big( \Sigma^* \big)$ denotes the power set of $\Sigma^*$.
    I.e. a language is a set of words over an alphabet.
\end{definition}

\subsubsection{Decision Problems and Decision Quantum Circuit Simulation}
I now present further definitions, specifically about decision problems and a particular problem - decision quantum circuit simulation - that will be useful later.
\begin{definition}
     A \underline{decision problem} is a problem that can only have one of two answers (that can be interpreted as ``yes" and ``no" or ``accept" and ``reject.").
    For the purposes of this paper, a decision problem will exclusively be restricted to deciding if a given word is in a given language.
\end{definition}

\begin{definition}
    A \underline{promise} is a subset of $\Sigma^*$, for some alphabet $\Sigma$
\end{definition}

\begin{definition}
    A \underline{promise problem} is a decision problem but with the guarantee all inputs will in in a given promise.
\end{definition}

\begin{definition}
    A \underline{gateset of a quantum device} is the finite set of all the gates that can be applied on the quantum device. 
\end{definition}

\begin{definition}
An instance of \underline{Decision Quantum Circuit Simulation} (hereafter DCS) is defined by three objects:\\
$\bullet$ A binary input string, $s$\\
$\bullet$ An ordered set, $\mathbb{G}$, of unitary gates (from a fixed gateset) acting on $\textit{Poly} \big(\vert s \vert \big)$ qubits~\footnote{Where  $\textit{Poly} \big(\vert s \vert \big)$ qubits denotes that there exists some polynomial - a, potentially, different one for each instance of the problem bounding the number of qubits.}.\\
$\bullet$ An ordered set of single-qubit measurements, $\mathbb{M}$\\
These three objects define a quantum circuit: the input string defines a $Z$-basis product state prepared at the start of the circuit, the ordered set of gates in $\mathbb{G}$ are then applied to the prepared state before the measurements of $\mathbb{M}$ are applied.\\
The task is to return the most likely outcome of the last measurement in $\mathbb{M}$, subject to the promise the probability of each outcome is in $[0, \frac{1}{3}] \cup [\frac{2}{3}, 1]$.
\end{definition}
    Despite the usefulness of DCS, a specific input to it can be expressed more concisely by slightly changing the problem itself, to give Simplified Decision Quantum Circuit Simulation - as defined in Def.~\ref{def:sDCS}.
\begin{definition}
\label{def:sDCS}
An instance of \underline{Simplified Decision Quantum Circuit Simulation} (hereafter sDCS) is defined by a single object:\\
$\bullet$ An ordered set, $\mathbb{G}$, of gates (in a fixed gateset) acting on $\vert s \vert$ qubits.\\
The task is then to decide an instance of DCS with an input string $s$ being $0^{n}$ (where $n$ is the number of qubits defined by the gates in $\mathbb{G}$), the ordered set being $\mathbb{G}$, and $\mathbb{M}$ being fixed as single-qubit measurements in the computational basis on each qubit at the end of the circuit.
\end{definition}
There are in fact many versions of sDCS depending on the gateset used, but I generalised by not specifying the gateset and instead denoting it by $\mathbb{T}$.
There exists a one-one mapping from sDCS to $\Sigma^*$ - as shown in Algorithm~\ref{11MappingAlgorithm} - and so I can define a language $\mathcal{L}_{\textit{sDCS}} \subset \Sigma^*$ of instances where the measurement outcome with a probability of at least $2/3$ is $\vert 1 \rangle$.

\begin{algorithm}[H]
$\mathbf{Input:}$ \\
$\bullet$ A circuit, expressed in a specific gate set (of size, $\Xi \in \mathbb{N}$).
\begin{enumerate}
    \item  $\Sigma = \{z \text{ } \vert \text{ } z \in \mathbb{N} \text{ } s.t. \text{ } z \leq \vert \mathbb{T} \vert + 1 \}$
     \item Arbitrarily choose a pairing between elements of the gate set and the integers strictly greater than 1 (expressed in a base that is the same as the size of the gate set), call this $\bold{T}$.

     \item Arbitrarily choose a similar mapping between the each qubit and a subset of strings of a fixed length (the upper bound of the $\text{log}_{\Xi}$ of the number of qubits). Call this one-one mapping $\textit{Bin}$.   

\item stringToReturn = an empty string

\item For each gate $g$ in $\mathbb{G}$: \begin{enumerate}
    \item Let $g_{\textit{op}}$ be the actual operation applied e.g. a Pauli $X$ gate, and $g_{\textit{ind}}$ be the index of the qubit it is applied to.
    \item stringToReturn.append($\bold{T}(g_{\textit{op}})$)
    \item stringToReturn.append($\textit{Bin}(g_{\textit{ind}})$)
    \end{enumerate}
    \end{enumerate}
    $\mathbf{Return}:$ stringToReturn
    \caption{Algorithm for encoding a circuit in a string \label{11MappingAlgorithm}}
\end{algorithm} 
I also present an algorithm (Algorithm~\ref{circuitExtractingAlgorithm}), that extracts the encoded circuit from the string encoding it.
\begin{algorithm}[H]
$\mathbf{Input:}$ \\
$\bullet$ A string, $x$, encoding a circuit, on $N$ qubits, in a specified and known gate set.
\begin{enumerate}
    \item Split the string into substrings of $\lceil \text{log}_{\Xi} (N) \rceil + 1$ characters.
    \item circuitToReturn = empty circuit
    \item For every substring
    \begin{enumerate}
        \item Apply $\mathbb{T}^{-1}$ to its first character of the substring - label the result $g_{\textit{op}}$ -  and $\textit{Bin}^{-1}$ to the remaining characters - labelling the result $g_{\textit{ind}}$.
        \item Add $g_{\textit{op}}$ to the qubit $g_{\textit{ind}}$ in circuitToReturn.
    \end{enumerate}
\end{enumerate}
    $\mathbf{Return}:$ circuitToReturn
    \caption{Algorithm for extracting an encoded circuit in a string \label{circuitExtractingAlgorithm}}
\end{algorithm}

\subsubsection{Complexity Classes}
I now turn to complexity theory, to give a brief introduction to the important complexity classes that are relevant to the results in this paper.
\begin{definition}
    A problem is \underline{decidable} if it can be answered correctly by some Turing machine in finite time for any input.
\end{definition}

\begin{definition}
    \underline{P} is the set of all problems where the runtime to decide any input instance, on a deterministic Turing machine, is bounded by a polynomial of the size of the input.
\end{definition}

\begin{definition}
    \label{def:BQPviaQuantumTuring}
    \underline{BQP} is the set of all problems where the runtime to correctly decide any input instance with probability greater than $\frac{2}{3}$, on a quantum Turing machine, is bounded by a polynomial of the size of the input.
\end{definition}
Def.~\ref{def:BQPviaQuantumTuring} can be expressed differently - in Def.~\ref{def:BQPvia circuits} - using Lemma~\ref{lem:quantumTuringToCircuits}.
\begin{lemma}[Ref.~\cite{doi:10.1098/rspa.1985.0070}]
    \label{lem:quantumTuringToCircuits}
    Any decision problem decidable by a quantum Turing machine in polynomial (in the input size) time - with at least a $2/3$ probability of success - is also decidable by a uniform family of circuits, each using at most a polynomial (in the input size) number of gates - also with at least a $2/3$ probability of success
\end{lemma}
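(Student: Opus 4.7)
The plan is to simulate a polynomial-time quantum Turing machine (QTM) step by step using quantum gates acting on a register of qubits that encodes the QTM's configuration. I first fix a QTM $M$ deciding the language in time $p(n)$ with success probability at least $2/3$ on inputs of size $n$. Since $M$ runs for at most $p(n)$ steps, it touches at most $p(n)$ tape cells, and therefore its entire configuration (tape contents, head position, internal state) fits in $O(p(n))$ qubits: one block per tape cell of $\lceil \log_2 |\Gamma| \rceil$ qubits (where $\Gamma$ is the finite tape alphabet), one block of $\lceil \log_2 p(n) \rceil$ qubits for the head position, and $O(1)$ qubits for the internal state drawn from the finite control. Initialising this register from $|0\rangle^{\otimes \textit{poly}(n)}$ to the encoding of the input string $x$ together with blank tape and start state requires only $O(n)$ single-qubit $X$ gates.

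Next, I would implement one QTM step as a unitary $U_{\textit{step}}$ on the configuration register. The QTM's well-formedness condition guarantees that its local transition function, viewed as an operator on (state $\otimes$ symbol-under-head $\otimes$ head-shift), is a unitary of fixed constant dimension. Because this action depends on where the head currently is, I implement $U_{\textit{step}}$ as a sum of terms indexed by head position $i \in \{0,\ldots,p(n)-1\}$, each controlled on the head register equalling $i$ and acting on the state block together with the $i$-th tape block; decomposed into standard gates this gives $O(p(n))$ elementary operations per step. Composing $p(n)$ copies of $U_{\textit{step}}$ yields a circuit of size $O(p(n)^2)$, after which I measure the dedicated accept/reject cell of the tape in the computational basis; by construction, the probability of seeing ``accept'' is exactly the acceptance probability of $M$ on $x$, so the $2/3$ gap is preserved.

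For uniformity, I would observe that the construction above is completely mechanical given $n$: a classical deterministic Turing machine, knowing the fixed description of $M$ and the polynomial $p$, can list the gates of the circuit in time polynomial in $n$ by iterating over step indices and head positions and printing the corresponding controlled-transition gate decomposition. This is the standard notion of (polynomial-time) uniform circuit families and matches what Def.~\ref{def:BQPvia circuits} will require.

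The main obstacle I expect is the careful verification that the controlled-on-head-position ``apply local transition'' operator really is unitary and really reproduces the QTM's one-step evolution on the encoded configuration space; in particular one must track that the head-shift and symbol-rewrite are performed coherently on the appropriate subregisters without collision between branches with different head locations. This is a bookkeeping exercise rather than a conceptual difficulty, and it reduces to invoking the unitarity of the QTM's transition amplitude matrix, which is built into the definition of a well-formed QTM; once that is in place the size bound, the uniformity, and the preservation of the success probability all follow immediately.
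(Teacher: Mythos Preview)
The paper does not give its own proof of this lemma: it is stated with a bare citation to the literature and then used as a black box to pass from Def.~\ref{def:BQPviaQuantumTuring} to Def.~\ref{def:BQPvia circuits}. So there is nothing in the paper to compare your argument against at the level of technique.

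That said, your sketch is the standard and correct route. Encoding the configuration of a $p(n)$-time QTM in $O(p(n))$ qubits, implementing one transition as a head-position-controlled application of the constant-size local unitary, iterating $p(n)$ times, and then measuring the accept cell is exactly the Yao-style simulation that underlies this equivalence; your size bound of $O(p(n)^2)$ gates and your uniformity argument are both right. The obstacle you flag --- that the ``controlled-on-head'' step operator is genuinely unitary and faithfully reproduces one QTM step --- is indeed the only place requiring care, and you correctly locate its resolution in the well-formedness (local unitarity) condition on the QTM's transition amplitudes. One small addition worth making explicit if you write this up in full: the constant-dimension local unitary must be compiled into the fixed gateset, which costs only a constant factor (or, if you want exact amplitudes, invoke Solovay--Kitaev for an approximation that does not disturb the $2/3$ gap). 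Otherwise the proposal is sound and more detailed than anything the paper itself supplies.
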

\begin{definition}
    \label{def:BQPvia circuits}
    The complexity class \underline{BQP} is the set of all decision problems that can be decided by a uniform family~\footnote{meaning each circuit of a given size can be constructed in polynomial time by a classical Turing machine.} of quantum circuit, each with depth and number of gates bounded by the size of the input to the circuit.
\end{definition}
Therefore, Lemma~\ref{lem:quantumTuringToCircuits} can equivalently be stated as Def.~\ref{def:BQPviaQuantumTuring} and Def.~\ref{def:BQPvia circuits} define the same set of decision problems. This enables us - throughout the rest of this paper - to consider quantum computations entirely in the form of quantum circuits.
There is then one more definition - Def.~\ref{def:CDevice} - that gives useful notation for how I consider the computations in various complexity classes.
\begin{definition}
    \label{def:CDevice}
    For a complexity class, $C$, a \underline{$C$-device} is anything that can decide all problems in $C$. A \underline{$C$-algorithm} is an algorithm that can be run on a $C$-device, and saying a problem can be decided with \underline{$C$-resources} or an algorithm can run with $C$-resources is equivalent to saying the problem can be decided with a $C$-device and/or using a $C$-algorithm.
\end{definition}

\section{Results}
\subsection{Pre-liminary and Preparatory Results}
The first steps in presenting my results is to establish the foundational tools used in deriving the main results.
\begin{definition}
    Let $\mathcal{L}$ be a decision promise problem.
$\mathcal{L}$ is \underline{$C$-paddable} if, and only if, there exists two functions: $\textit{Pad}_{\mathcal{L}}$, $\textit{Dec}_{\mathcal{L}}$, each of which are $C$-computable. It is also required that $\forall x \in \mathbb{P}, y \in \Sigma^*$:
\begin{align}
    &\textit{Pad}_{\mathcal{L}}(x, y) \in \mathcal{L}
    \iff
    x \in \mathcal{L}\\
    &\textit{Dec}_{\mathcal{L}} \big( \textit{Pad}_{\mathcal{L}}(x, y) \big) = y.
\end{align}
\end{definition}

\begin{definition}
    \label{def:definingBiImmunity}
    A language $\mathcal{L}$ is \underline{$\mathcal{C}$-bi-immune} if neither $\mathcal{L}$ or its complement has an infinite subset in $\mathcal{C}$.
\end{definition}
The most important concept covered by Def.~\ref{def:definingBiImmunity} is P-bi-immunity, which is known to be very common~\cite{MAYORDOMO1994487}. This concept of $\mathcal{C}$-bi-immune can be extended to promise problems, as in Def.~\ref{def:definingBiImmunityInPromise}.
\begin{definition}
    \label{def:definingBiImmunityInPromise}
    A language $\mathcal{L}$ is \underline{$\mathcal{C}$-bi-immune} within a promise, $\mathbb{P}$, if neither $\mathcal{L}$ or its complement has an infinite subset in $\mathcal{C} \cap \mathbb{P}$.
\end{definition}
But it is then useful to define the opposite of Def.~\ref{def:definingBiImmunityInPromise}, in Def.~\ref{def:definingNOTBiImmunityInPromise}.
\begin{definition}
    \label{def:definingNOTBiImmunityInPromise}
    A language $\mathcal{L}$ is not \underline{$\mathcal{C}$-bi-immune} within a promise, $\mathbb{P}$, if either $\mathcal{L}$ or its complement has an infinite subset in $\mathcal{C} \cap \mathbb{P}$.
\end{definition}
A related notion is when a given subset - which can also be a promise - is closed under a padding function.
\begin{definition}
    \label{def:closedPadding}
    For any alphabet, $\Sigma$, any subset, $\mathcal{S}$, of $\Sigma^*$ is said to be \underline{closed under padding} - with respect to a given padding function, $\textit{Pad}_{\mathcal{L}}: \Sigma^* \rightarrow \Sigma^*$, hence I may also say it is closed under the padding function - if $\forall x \in \mathcal{S}$,
    \begin{align}
        x \in \mathcal{S} \Rightarrow \forall y \in \Sigma^*, \textit{Pad}_{\mathcal{L}}(x, y) \in \mathcal{S}.
    \end{align}
\end{definition}
\subsection{Main Result}
Before considering P-bi-immunity, I start more general and consider bi-immunity for more general complexity classes.
\begin{lemma}
    \label{lem:paddMeansNoCBiImmune}
    If a non-trivial~\footnote{Meaning that the correct solution is not always either reject or accept for all instances. A language is said to be trivial if and only if all possible inputs - i.e. those inputs in the promise - are either all in the language or all not in the language.} decidable promise problem, $\mathcal{L} \subseteq \Sigma^*$ (with promise $\mathbb{P} \subseteq \Sigma^*$), is $C$-paddable, such that the promise of the problem is closed under the padding function, it cannot be $C$-bi-immune within the promise.
\end{lemma}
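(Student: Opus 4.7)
The plan is to construct, from a single witness in the promise, an explicit infinite subset of either $\mathcal{L}\cap\mathbb{P}$ or $\mathcal{L}^c\cap\mathbb{P}$ that is decidable in $C$. Because the problem is non-trivial, there exists some $x_0\in\mathbb{P}$ that is either in $\mathcal{L}$ or in its complement; by symmetry (the complement of a $C$-paddable problem is $C$-paddable using the same $\textit{Pad}$ and $\textit{Dec}$, with the same closure hypothesis on $\mathbb{P}$) I will only treat the case $x_0\in\mathcal{L}$. Fix this $x_0$ and define
\begin{equation}
    S \;=\; \{\,\textit{Pad}_{\mathcal{L}}(x_0, y) \;:\; y\in\Sigma^*\,\}.
\end{equation}

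Next I would verify the three properties of $S$ separately. First, $S\subseteq\mathcal{L}$: this is immediate from $x_0\in\mathcal{L}$ combined with the biconditional in the definition of $C$-paddability. Second, $S\subseteq\mathbb{P}$: this follows directly from the hypothesis that $\mathbb{P}$ is closed under $\textit{Pad}_{\mathcal{L}}$ applied to the element $x_0\in\mathbb{P}$. Third, $S$ is infinite: the existence of $\textit{Dec}_{\mathcal{L}}$ with $\textit{Dec}_{\mathcal{L}}(\textit{Pad}_{\mathcal{L}}(x_0,y))=y$ forces $y\mapsto\textit{Pad}_{\mathcal{L}}(x_0,y)$ to be injective, and $\Sigma^*$ is already infinite (assuming, as is standard, $|\Sigma|\geq 1$ and $\Sigma^*$ includes the empty word and all finite sequences).

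The key remaining step is to show that the language $S$ itself lies in $\mathcal{C}$. To decide $z\in S$ on a $C$-device, first compute $y:=\textit{Dec}_{\mathcal{L}}(z)$, then compute $w:=\textit{Pad}_{\mathcal{L}}(x_0,y)$ (which is legal because $x_0$ is a hard-coded constant in $\mathbb{P}$), and finally accept iff $w=z$. Correctness: if $z\in S$ then $z=\textit{Pad}_{\mathcal{L}}(x_0,y')$ for some $y'$, so $y=\textit{Dec}_{\mathcal{L}}(z)=y'$ and hence $w=z$; conversely, if $w=z$ then by construction $z$ is in the image $S$. Since $\textit{Pad}_{\mathcal{L}}$ and $\textit{Dec}_{\mathcal{L}}$ are $C$-computable and string-equality is freely available to any reasonable $C$-device, this procedure runs in $\mathcal{C}$, so $S\in\mathcal{C}$. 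Combining the three properties, $S$ is an infinite subset of $\mathcal{L}\cap\mathbb{P}$ that belongs to $\mathcal{C}$, contradicting Def.~\ref{def:definingBiImmunityInPromise}.

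The main obstacle is philosophical rather than technical: the definition of $C$-computability for a function, and the claim that sequential composition plus equality keeps the combined routine inside $\mathcal{C}$. For $\mathcal{C}=\mathrm{P}$ this is transparent, but for $\mathcal{C}=\mathrm{BQP}$ some care is needed since BQP is defined as a class of decision problems, so one must interpret ``$C$-computable function'' as computability of the graph (or bit-by-bit computability of the output) by a $C$-device and note that finitely many such invocations followed by a comparison of bounded-length strings still lies in $\mathcal{C}$. Once this convention is fixed, the construction of $S$ above closes the argument without further work.
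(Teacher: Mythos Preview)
Your proposal is correct and follows essentially the same approach as the paper: you build the set $S=\{\textit{Pad}_{\mathcal{L}}(x_0,y):y\in\Sigma^*\}$ for a fixed witness $x_0$, argue it is infinite and contained in $\mathcal{L}\cap\mathbb{P}$, and decide membership via the test $z=\textit{Pad}_{\mathcal{L}}(x_0,\textit{Dec}_{\mathcal{L}}(z))$, exactly as in the paper's Eqn.~\ref{checkCondition}. The only cosmetic differences are that the paper explicitly treats both a yes-witness $x^{\in}$ and a no-witness $x^{\not\in}$ (packaged into Algorithm~\ref{thread2Alg}) whereas you invoke symmetry, and that your injectivity-via-$\textit{Dec}_{\mathcal{L}}$ argument for $|S|=\infty$ is slightly more explicit than the paper's.
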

\begin{proof}{[Following the approach of Ref.~\cite[Lemma.~7.26]{greatTextbook}]}
As, by assumptiom, $\mathcal{L}$ is $C$-paddable, there exists two functions: $\textit{Pad}_{\mathcal{L}}$ and $\textit{Dec}_{\mathcal{L}}$ (each of which are $C$-computable), such that, $\forall x \in \mathbb{P}, y \in \Sigma^*$:
\begin{align}
    &\textit{Pad}_{\mathcal{L}}(x, y) \in \mathcal{L}
    \iff
    x \in \mathcal{L}\\
    &\textit{Dec}_{\mathcal{L}} \big( \textit{Pad}_{\mathcal{L}}(x, y) \big) = y.
\end{align}
For any fixed $x \in \mathbb{P}$, I then aim to show that $\textit{Pad}_{\mathcal{L}}(x, y)$ can define a one-to-infinite $C$-computable invertable (with $C$-resources) mapping from $\mathcal{L}$ to itself. To do this define a set that this function will map a given  $x \in \mathbb{P}$ to:
\begin{align}
    \mathbb{S}_x &=
    \bigg\{ \textit{Pad}_{\mathcal{L}}(x, y) \big \vert \text{ } y \in \Sigma^*  \bigg\}.
\end{align}
The important features of the $\mathbb{S}_x$ are that:
\begin{enumerate}
    \item As there are infinite possible choices of $y \in \Sigma^*$, $\forall x \in \mathbb{P}$, $\mathbb{S}_x$ is infinite.
    \item $\forall x \in \mathbb{P}$, as the promise is assumed to be closed under the padding function, $\mathbb{S}_x \subseteq \mathbb{P}$. 
\end{enumerate}
Then the inverse of the function, $f: \mathbb{P} \rightarrow \mathbb{P}$, mapping $x$ to $\mathbb{S}_x$ is infinite-to-one.
However, the central - and \emph{most} important, for my purposes in this paper - property of the $\mathbb{S}_x$ is that, $\forall z \in \Sigma^*$, membership of any specified $\mathbb{S}_x$ can always be decided by a $C$-device. This can be done by applying $f^{-1}: \mathbb{P} \rightarrow \mathbb{P}$ to $z$ and checking if the output is $x$. The full algorithm to calculate $x$ from $\textit{Pad}_{\mathcal{L}}(x, y)$ is a little long so it is easier to check if a given $x \in \mathbb{P}$ is the output.
For a given $\textit{Pad}_{\mathcal{L}}(x, y)$, $\textit{Dec}_{\mathcal{L}} \big( \textit{Pad}_{\mathcal{L}}(x, y) \big) = y$ is much easier to obtain - and can be done with $C$-resources. Likewise $\textit{Pad}_{\mathcal{L}}(x, \textit{Dec}_{\mathcal{L}} \big( \textit{Pad}_{\mathcal{L}}(x, y) \big))$ can always be computed with $C$-resources. Then, $\forall x \in \mathbb{P}$, membership of $\mathbb{S}_x$ can always be checked with $C$-resources, as $\textit{Pad}_{\mathcal{L}}$ and $\textit{Dec}_{\mathcal{L}}$ are $C$-computable, and $\forall z \in \mathbb{P}$:
\begin{align}
    z \in \mathbb{S}_x
    & \iff
    \label{checkCondition}
    z = \textit{Pad}_{\mathcal{L}}(x, \textit{Dec}_{\mathcal{L}} \big(z\big)).
\end{align} The truth of Eqn.~\ref{checkCondition} can be checked by a $C$-device, therefore, $\forall z \in \mathbb{S}_x$, $z \in \mathcal{L}$ can be decided by a $C$-device.
For any $x \in \mathbb{P}$, $\mathbb{S}_x$ is both infinite and a subset of the promise, $\mathbb{P}$.
As padding does not affect membership of the corresponding paddable language every element of $\mathbb{S}_x$ is in the paddable language if and only if the corresponding $x \in \Sigma^*$ is.
I can then define an algorithm that decides if $x \in \mathcal{L}$, using $C$-resources, for an infinite number of instances $x \in \mathbb{P} \cap \mathcal{L}$ and an infinite number of instances $x \in \mathbb{P} \cap \mathcal{L}^c$.
  To do this, first define two possible inputs:
\begin{enumerate}
    \item $x^{\in} \in \mathbb{P} \cap \mathcal{L}$
    \item $x^{\not \in} \in \mathbb{P} \cap \mathcal{L}^c$
\end{enumerate} where $\mathcal{L}^c$ denotes the subset of $\mathbb{P}$ not in $\mathcal{L}$. Both of these exists as, by assumption, the promise problem being considered is non-trivial. Therefore, using the method in Eqn.~\ref{checkCondition} there exists a $C$-algorithm that, $\forall x \in \mathbb{P}$, can decide if $x$ is in $\mathbb{S}_{x^{\in}}$ and, separately, a $C$-algorithm that, $\forall x \in \mathbb{P}$ can decide if $x$ is in $\mathbb{S}_{x^{\not \in}}$. Label these algorithms $\bold{A}_{\in}$ and $\bold{A}_{\not \in}$, respectively.
These are then used in the below Algorithm~\ref{thread2Alg}, which can decide if any element of an infinite and non-trivial subset of $\mathbb{P}$ is in $\mathcal{L}$, with $C$-resources.
\begin{algorithm}[H]
$\bold{Input:}$  $x \in \Sigma^*$, the string to decide if in $\mathcal{L}$
\begin{enumerate}
    \item  \If{\big( $\bold{A}_{\in}(x)$ == ACCEPT \big)}
{
\begin{enumerate}
 \item Result = ACCEPT
\end{enumerate}
}
\item \Else
{
\begin{enumerate}
    \item \If{\big( $\bold{A}_{\not \in}(x)$ == ACCEPT \big)}
{
\begin{enumerate}
    \item Result = REJECT
\end{enumerate}
}
\end{enumerate}
}
\end{enumerate}
$\bold{Return}:$ Result
 \caption{ Algorithm for Deciding if Select Elements of $\mathbb{P}$ are in $\mathcal{L}$
 \label{thread2Alg}} 
\end{algorithm}
Hence, for $\big \vert \mathbb{S}_{x^{\in}} \big \vert$ instances, Algorithm \ref{thread2Alg} decides $x \in \mathcal{L}$ with $C$-resources, as it terminates after step 1a; for $\big \vert  \mathbb{S}_{x^{\not \in}} \big \vert$ instances, Algorithm \ref{thread2Alg} decides $x \not \in \mathcal{L}$ with $C$-resources, as it terminates after step 2 a i; and for all other instances in $\mathbb{P}$, Algorithm \ref{thread2Alg} successfully decides membership of $\mathcal{L}$.
As both $\mathbb{S}_{x^{\in}}$ and $\mathbb{S}_{x^{\not \in}}$ are infinite, this precludes $\mathcal{L}$ from being $C$-bi-immune within $\mathbb{P}$.
\end{proof}
Lemma~\ref{lem:paddMeansNoCBiImmune} was a lemma that applies in the abstract, to a general problem, but it becomes relevant to quantum computing via the below Lemma~\ref{lem:circuitPaddability}.
\begin{lemma}
    \label{lem:circuitPaddability}
    Any decision problem where the input is a circuit - however encoded or formatted, with whatever promise - such that it is possible to express the identity on a single qubit in at least two - efficiently distinguishable - ways, at any point in the circuit, as many times as is desired, such that the circuit remains in the promise; is classically paddable, with the required time to do so scaling linearly in the length of the encoded message and the length of the ways to express the identity on a single qubit.
\end{lemma}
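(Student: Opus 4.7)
The plan is to give an explicit, constructive pair $(\textit{Pad}_{\mathcal{L}}, \textit{Dec}_{\mathcal{L}})$ that realises the padding by inserting identity sub-circuits. By hypothesis there are at least $k \geq 2$ distinct, efficiently distinguishable single-qubit identity expressions $I^{(0)}, I^{(1)}, \ldots, I^{(k-1)}$, each of which can be placed at a chosen location in the circuit, arbitrarily many times, without leaving the promise.

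First I would fix a canonical insertion location in the circuit encoding (for instance, immediately before the measurement layer, or at a prescribed qubit index and time-step). Given $(x, y)$, the padding function $\textit{Pad}_{\mathcal{L}}$ would then: (i) re-express $y \in \Sigma^*$ in base $k$ via a fixed classical bijection from $\Sigma^*$ to $\{0, \ldots, k-1\}^*$, producing symbols $y_1 y_2 \cdots y_m$; (ii) insert the concatenation $I^{(y_1)} I^{(y_2)} \cdots I^{(y_m)}$ at the canonical location; and (iii) re-encode the resulting circuit using Algorithm~\ref{11MappingAlgorithm} (or whatever encoding the instance of the problem uses). The decoder $\textit{Dec}_{\mathcal{L}}$ would locate the canonical position, parse the sequence of identity expressions stored there (using efficient distinguishability to identify each $I^{(y_i)}$), and invert the base-$k$ bijection to recover $y$.

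Correctness has two parts. For membership preservation, each $I^{(j)}$ implements the identity on one qubit, so composing it into the circuit does not alter the implemented unitary or its measurement-outcome distribution; hence $x \in \mathcal{L} \iff \textit{Pad}_{\mathcal{L}}(x, y) \in \mathcal{L}$, and by hypothesis the padded circuit remains in $\mathbb{P}$. For decoder correctness, efficient distinguishability guarantees that the canonical region can be parsed deterministically into the symbols $y_1 \cdots y_m$, which the bijection then inverts to $y$, so $\textit{Dec}_{\mathcal{L}}(\textit{Pad}_{\mathcal{L}}(x, y)) = y$. The runtime of both Pad and Dec is dominated by a single linear scan over the padding region and is therefore $O(m \cdot \ell)$ where $\ell = \max_j |I^{(j)}|$, matching the claimed linear scaling; both procedures are classically computable, so $\mathcal{L}$ is in particular P-paddable.

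The one subtlety, and the main - though still modest - obstacle, is making the padded block unambiguous to the decoder so that it can cleanly read off the $y_i$ boundaries. With the uniform-length substring structure of Algorithm~\ref{11MappingAlgorithm} each gate occupies a known number of characters, so if every $I^{(j)}$ consists of a pre-agreed number of gates the block decomposes unambiguously; alternatively one may require the $\{I^{(j)}\}$ to form a prefix-free set, at the cost of a slightly reduced rate. Either choice keeps both Pad and Dec in linear-time classical computation, and hence the hypothesis of Lemma~\ref{lem:paddMeansNoCBiImmune} can be fulfilled for any $C \supseteq P$ whenever such a pair of identity expressions exists in the circuit model under consideration.
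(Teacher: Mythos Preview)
Your construction is the same idea as the paper's: encode the symbols of $y$ as a choice among distinguishable single-qubit identity sub-circuits inserted into $x$, and decode by scanning for those sub-circuits. The paper differs only cosmetically in that it first imposes a total order on all inter-gate ``spaces'' of the circuit and distributes the identity tokens across those spaces, rather than concatenating them at one canonical location; neither variant is stronger than the other.

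There is, however, one real gap. Your decoder recovers $y$ by reading the run of identity tokens sitting at the canonical location, but nothing in your construction rules out that the \emph{original} circuit $x$ already contains copies of $I^{(0)},\ldots,I^{(k-1)}$ adjacent to that location. In that case the decoder cannot tell where $x$ ends and the pad begins, so $\textit{Dec}_{\mathcal{L}}(\textit{Pad}_{\mathcal{L}}(x,y))$ need not equal $y$. Your ``subtlety'' paragraph only addresses the boundaries \emph{between} the $y_i$ tokens (fixed gate-count or prefix-free tokens); it does not address the boundary between $x$ and the padding block itself. The paper's proof handles exactly this point: before inserting the new pad, $\textit{Pad}_{\mathcal{L}}$ first scans $x$ and strips out any identity tokens already occupying the designated positions (removing identities does not change the implemented unitary, so membership of $\mathcal{L}$ is preserved), guaranteeing that after padding the only tokens present are the intended ones. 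Adding this normalisation step to your construction closes the gap and keeps everything linear-time.
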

\begin{proof}
    Given a circuit, it is possible to establish a total ordering on the gates of a circuit. This starts by associating each gate with a single qubit: for single-qubit gates, this is easy; for $Q$-qubit gates (where $Q \geq 2$), associate the gate with the top-most qubit (defining top-most however you like, provided it imposes a total ordering on the qubits) that the gate acts on.
    Then, impose an ordering on the qubits. As the set of qubits is finite, imposing a total ordering is always efficiently possible. 
    A gate precedes another in this ordering if the first gate is associated with a qubit (as defined above) that precedes the associated qubit of the second gate in the ordering on qubits just defined.
    If two gates are associated with the same qubit, the precedence between them is decided by which acts first in the circuit.

    With an ordering on the gates established, I can assign a total order to an efficiently identifiable subset of spaces in the circuit. For each gate in circuit, the space immediately preceding it is in the totally ordered subset of spaces and one such space precedes another if the gate associated with the first precedes the other in the ordering on gates previously established.
    This ordering on the subset of spaces in the circuit can be easily converted to an indexing of that same subset.
    This subset can be extended to be an infinite ordered set by considering the end of the circuit - immediately before the measurements - to consist of an infinite number of spaces. These additional spaces are indexed starting at the top-most qubit and moving across qubits before moving along a qubit (i.e. in time).\\

    The padding function works by - after using the decoding function to detect and remove any already-present message (that is either deliberately or accidentally there) - first encoding the message to be added to the circuit in binary, then visiting each site of the above defined indexed set of spaces - in order of the indexing. At the $j$th space, if the $j$th bit of the message to encode is a $1$ add one way of implementing the identity - as is assumed to exist - into the $j$th space, if the $j$th bit of the message to encode is a $0$, add the other way of implementing the zero - also assumed to exist. As this effectively only adds the identity to the circuit, it does not affect the outcome of the circuit, and hence does not affect membership of a language.

    To recover the encoded message, as the ways of encoding the identity are known and efficiently distinguishable, the circuit can just be scanned, efficiently, to find them, and the bits of the encoded message recover their order from the ordering on the spaces they occupy. The message can then be decoded from binary.

    Therefore the language under consideration has both a valid encoding and decoding function, so is paddable. Notably, both the encoding and decoding can be performed classically and in time scaling  linearly in the length of the encoded message and the length of the ways to express the identity on a single qubit.
\end{proof}
Lemma~\ref{lem:circuitPaddability} has now provided a guide to when promises of sDCS are not P-bi-immune. I can then start applying it to specific promises - expressed less formally - of sDCS, in the main theorem of this manuscript: Theorem~\ref{mainTheorem}.
\begin{theorem}
    \label{mainTheorem}
    sDCS is not P-bi-immune regardless of any restrictions on:
    \begin{enumerate}
        \item which single-qubit gates can be applied immediately after state prep, that essentially encode the inputs (and would be inputs in DCS), provided it does not interfere with the encoding
        \item which single-qubit gates can be applied immediately before measurement, that essentially encode which measurements are applied (and would be measurements in DCS, e.g. what basis they are in, how many there are, or any arbitrarily complex rule), provided it does not interfere with the encoding
        \item the number of qubits (provided no upper bound is placed on them), (e.g. must be even, must be prime)
        \item the gateset (e.g. restrictions on which gates can be used restrictions on the number of a certain gate used, the universality of the gateset)
        \item the connectivity of the device (i.e. what qubits multi-qubit gates can go between) or the order in which different two qubit gates can apply
        \item  on particular combinations of gates being applied, provided it does not preclude the encodings of the identity
    \end{enumerate}
\end{theorem}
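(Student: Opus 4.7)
The plan is to reduce Theorem~\ref{mainTheorem} to the two preceding lemmas. Let $\mathbb{P}$ denote the promise over sDCS instances encoded by any combination of the six listed restrictions. By Lemma~\ref{lem:paddMeansNoCBiImmune}, it suffices to verify that sDCS within $\mathbb{P}$ is (a) decidable, (b) non-trivial, (c) P-paddable, and (d) that $\mathbb{P}$ is closed under the corresponding padding function. Lemma~\ref{lem:circuitPaddability} then further reduces (c) and (d) to a single combinatorial requirement: at arbitrary interior locations in any $\mathbb{P}$-legal circuit, one must exhibit at least two efficiently distinguishable single-qubit identity words whose insertion keeps the circuit in $\mathbb{P}$.

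Conditions (a) and (b) are immediate. Decidability follows from brute-force exponential-time classical simulation, which is unaffected by any restriction on the input circuits. Non-triviality follows because each clause of Theorem~\ref{mainTheorem} carries an explicit carve-out (``provided it does not interfere with the encoding'' or ``provided it does not preclude the encodings of the identity''), so both a deterministic $\vert 0 \rangle$-output circuit and a deterministic $\vert 1 \rangle$-output circuit (e.g., the trivial circuit and a single $X$ on the measured qubit, suitably wrapped in any allowed prep/measurement gadget) remain in $\mathbb{P}$.

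For the combinatorial requirement I would work through the restrictions one at a time and then argue the choices compose. The padding supplied by Lemma~\ref{lem:circuitPaddability} only inserts single-qubit gates at pre-existing interior spaces, so it changes neither the qubit count (handling (3)), the connectivity graph or two-qubit ordering (handling (5)), nor the gates adjacent to preparation and measurement (handling (1) and (2), by reserving padding for the bulk of the circuit). For (4), one instantiates two length-matched identity words from the allowed gateset, for instance $HH$ versus $XX$ whenever Clifford gates are available, or $g^{2k}$ and $g^{2\ell}$ for a fixed single-qubit gate $g$ of finite order. For (6), the forbidden patterns are finite and decidable, so the identity words can be lengthened or interleaved with innocuous separators to avoid every forbidden substring. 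Combining the restrictions is then immediate: the padding is a string of single-qubit gates on one interior qubit, drawn from the allowed gateset, avoiding forbidden substrings and the prep/measurement regions, and so simultaneously respects all six clauses.

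The hard part is the joint treatment of (4) and (6) under adversarial choices: a very sparse gateset together with a forbidden-combination list that rules out all short identity words can, in principle, eliminate the naive candidates. I would handle this by noting that Lemma~\ref{lem:circuitPaddability} tolerates arbitrary (but fixed) identity-word length, so any distinguishable pair of identity words respecting the promise suffices, even if the padding density is small. The theorem's explicit proviso on (6), together with the requirement in (4) that some identity expression survive, is exactly the hypothesis under which such a pair is guaranteed, completing the reduction via Lemma~\ref{lem:paddMeansNoCBiImmune}.
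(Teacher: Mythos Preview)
Your proposal is correct and follows essentially the same route as the paper: invoke Lemma~\ref{lem:circuitPaddability} to obtain a classical padding function compatible with each listed restriction, then feed that padding into Lemma~\ref{lem:paddMeansNoCBiImmune} to conclude non--P-bi-immunity. The paper's own proof is considerably terser---it simply asserts that the enumerated restrictions satisfy the hypotheses of Lemma~\ref{lem:circuitPaddability} and then applies Lemma~\ref{lem:paddMeansNoCBiImmune}---whereas you additionally spell out decidability, non-triviality, and a clause-by-clause check that the identity-word insertions survive each restriction; this extra care is welcome but does not change the underlying argument.
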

\begin{proof}
    Any of the enumerated restrictions create a promise on the inputs to sDCS that meets the requirements of Lemma~\ref{lem:circuitPaddability}. Hence using the padding function established by Lemma~\ref{lem:circuitPaddability}, the requirements of Lemma~\ref{lem:paddMeansNoCBiImmune} are met. Therefore, the promise problem corresponding to sDCS with any of the enumerated restrictions is prohibited from being P-bi-immune.
\end{proof}
\begin{corollary}
    \label{cor:DCSnotImmune}
    DCS is not P-bi-immune regardless of any restrictions on:
    \begin{enumerate}
        \item the inputs to the circuit (e.g. the weight of the binary input string, any requirement on the input string itself)
        \item the measurements applied ( e.g. what basis they are in, how many there are, or any arbitrarily complex rule)
        \item the number of qubits (provided no upper bound is placed on them), (e.g. must be even, must be prime)
        \item the gateset (e.g. restrictions on which gates can be used restrictions on the number of a certain gate used, the universality of the gateset)
        \item the connectivity of the device (i.e. what qubits multi-qubit gates can go between) or the order in which different two qubit gates can apply
        \item on particular combinations of gates being applied, provided it does not preclude the encodings of the identity 
    \end{enumerate}
\end{corollary}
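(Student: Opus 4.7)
The plan is to show that the corollary follows almost immediately from the same two-step structure used in the proof of Theorem~\ref{mainTheorem}, namely by verifying that each enumerated restriction on DCS yields a promise problem whose circuit component is paddable in the sense of Lemma~\ref{lem:circuitPaddability}, so that Lemma~\ref{lem:paddMeansNoCBiImmune} applies. The key observation is that a DCS instance is a triple $(s, \mathbb{G}, \mathbb{M})$, and the padding function constructed in Lemma~\ref{lem:circuitPaddability} acts only on the gate list $\mathbb{G}$ by inserting one of two distinguishable identity-implementations into previously-defined indexed spaces. Because padding does not touch $s$ or $\mathbb{M}$, any promise phrased purely as a restriction on the input string (item 1) or on the measurement set (item 2) is automatically closed under the padding function.

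First I would fix an encoding of DCS instances into $\Sigma^*$ analogous to Algorithm~\ref{11MappingAlgorithm}: concatenate a prefix encoding $s$, then encode $\mathbb{G}$ with the scheme already used for sDCS, then append an encoding of $\mathbb{M}$. The decoder can locate these three blocks by their format, so padding applied inside the $\mathbb{G}$-block is both insertable and detectable in classical linear time. Then for each of the six enumerated restrictions I would check the two hypotheses of Lemma~\ref{lem:circuitPaddability}: (i) the restriction does not cap the qubit count, so the indexed set of insertion spaces is still infinite, and (ii) the restriction still leaves at least two efficiently distinguishable implementations of the single-qubit identity that can be inserted without leaving the promise. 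Restrictions 1 and 2 are immediate because they concern only $s$ and $\mathbb{M}$; restrictions 3--5 are inherited from Theorem~\ref{mainTheorem} verbatim (e.g.\ for a non-universal gateset one can still use $HH$ and $XX$, or any two short words of gates that multiply to $I$ and are allowed by the gateset and connectivity); and restriction 6 is exactly the hypothesis that at least two identity encodings survive.

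Having established paddability with the promise closed under the padding function, I would invoke Lemma~\ref{lem:paddMeansNoCBiImmune} with $C = \mathrm{P}$ to conclude that DCS under any of the enumerated restrictions is not P-bi-immune. The non-triviality hypothesis of Lemma~\ref{lem:paddMeansNoCBiImmune} is satisfied because each restriction still admits at least one circuit/input pair that accepts and one that rejects (e.g.\ a trivial circuit measuring the prepared input in the computational basis realises both outcomes by varying $s$, and for restrictions forbidding this one can flip a qubit with a single allowed gate).

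The only genuine obstacle is restriction 6 combined with restriction 4: one must confirm that in any gateset allowed by the corollary there are two short, gateset-legal gate-sequences which both implement the identity and are efficiently distinguishable as strings. This is the same obstacle present in Theorem~\ref{mainTheorem}, and I would handle it the same way --- the restriction itself is explicitly qualified by ``provided it does not preclude the encodings of the identity,'' so the hypothesis of Lemma~\ref{lem:circuitPaddability} is built into the restriction, and no further argument is needed beyond pointing this out.
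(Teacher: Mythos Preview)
Your proposal is correct and follows essentially the same route as the paper: the paper states Corollary~\ref{cor:DCSnotImmune} without a separate proof, treating it as immediate from the argument for Theorem~\ref{mainTheorem} (i.e.\ Lemma~\ref{lem:circuitPaddability} followed by Lemma~\ref{lem:paddMeansNoCBiImmune}). Your additional observation that the padding of Lemma~\ref{lem:circuitPaddability} touches only the gate list $\mathbb{G}$, so restrictions on $s$ and $\mathbb{M}$ are automatically closed under padding, is exactly the point the paper leaves implicit in passing from sDCS to DCS.
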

After the main results - Theorem~\ref{mainTheorem} and Corollary~\ref{cor:DCSnotImmune} - have been presented, I also present a lemma that allows us to potentially extend the results of this paper: Lemma~\ref{lem:notPnotC}.
\begin{lemma}
    \label{lem:notPnotC}
    If a promise problem is not P-bi-immune, then it is not $C$-bi-immune for any complexity class containing P.
\end{lemma}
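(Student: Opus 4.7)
The plan is to directly unwind the definitions of bi-immunity and exploit the hypothesis that $P \subseteq C$. By Def.~\ref{def:definingNOTBiImmunityInPromise}, the hypothesis that the promise problem $\mathcal{L}$ (with promise $\mathbb{P}$) is not P-bi-immune means that either $\mathcal{L}$ or its complement $\mathcal{L}^c$ contains an infinite subset that also lies in $P \cap \mathbb{P}$. Call this infinite subset $\mathbb{S}$; without loss of generality (the two cases are symmetric) assume $\mathbb{S} \subseteq \mathcal{L} \cap \mathbb{P}$ and $\mathbb{S} \in P$.

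Next I would invoke the assumption that $C$ is a complexity class containing P. Interpreting ``containing'' in the standard sense that every language decidable with P-resources is also decidable with $C$-resources (cf. Def.~\ref{def:CDevice}), one has $P \subseteq C$, and hence $P \cap \mathbb{P} \subseteq C \cap \mathbb{P}$. Consequently the same infinite set $\mathbb{S}$ that witnessed the failure of P-bi-immunity is itself an infinite subset of $\mathcal{L}$ lying in $C \cap \mathbb{P}$. Applying Def.~\ref{def:definingNOTBiImmunityInPromise} once more, this is exactly the certificate that $\mathcal{L}$ fails to be $C$-bi-immune within $\mathbb{P}$.

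There is essentially no technical obstacle: the statement is a monotonicity property of bi-immunity in the complexity-class parameter, inherited from monotonicity of the subset relation under intersection with the promise. The only points that warrant explicit mention in a careful write-up are (i) that the definition of ``contains P'' must be the language-set-inclusion sense, and (ii) that one should handle the $\mathcal{L}$ and $\mathcal{L}^c$ cases symmetrically so the argument covers both disjuncts of Def.~\ref{def:definingNOTBiImmunityInPromise}. Neither of these requires more than a sentence.
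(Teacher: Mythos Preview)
Your proposal is correct and follows essentially the same route as the paper: unwind the definition of non-bi-immunity to obtain an infinite witness set $\mathbb{S}$ decidable in P, use $P \subseteq C$ to conclude the same $\mathbb{S}$ is decidable with $C$-resources, and reapply the definition. If anything, your write-up is slightly more careful in explicitly handling both the $\mathcal{L}$ and $\mathcal{L}^c$ disjuncts, which the paper's proof leaves implicit.
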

\begin{proof}
    Let $\Sigma$ be the alphabet for the promise problem and $\mathcal{L}$ be the language capturing the promise problem with $\mathbb{P} \subset \Sigma^*$ being the promise. If $\mathcal{L}$ is not P-bi-immune, relative to the promise $\mathbb{P}$, then there exists some infinite subset of $\mathbb{P}$, $\mathcal{S}$, such that every word in $\mathcal{S}$ can be decided in polynomial time. Therefore, every word in $\mathcal{S}$ can be decided with $C$-resources if $P \subset C$. Therefore, the given promise problem is not $C$-bi-immune.   
\end{proof}

\subsection{Additional Properties of sDCS}
In addition to the main results of this paper, we have also presented a decision problem, sDCS. In this section, I investigate further properties of sDCS. These properties are concerned - in part - with reductions between languages.
\begin{definition}
    \label{def:reductions}
    A \underline{reduction} an instance, $x \in \Sigma^*$, of a decision problem defined by a language, $\mathcal{L}_A$, to an instance of another decision problem, characterised by the language $\mathcal{L}_B$ is a mapping, $\mathcal{R}: \Sigma^* \rightarrow \Sigma^*$, such that:
    \begin{align}
        x \in \mathcal{L}_A \iff \mathcal{R} (x) \in \mathcal{L}_B
    \end{align}
\end{definition}
To ease the presentation of this next section, I denote language $A$ being many-to-one reducible to language $B$ using the resources of complexity class $C$ by $A \leq^C_m B$ and, if this reduction is also one-one, I write  $A \leq^C_{1} B$. Then I define a notion very similar to padability but with less stringent conditions, in Def.~\ref{def:weakPadd}. 
\begin{definition}
\label{def:weakPadd}
    A language, $\mathcal{L}_{weak} \subseteq \Sigma^*$, is weakly $C$-paddable if and only if:
    \begin{align}
        \mathcal{L}_{weak} \times \Sigma^*
        \leq^C_{1} \mathcal{L}_{weak}.
    \end{align}
\end{definition}

\begin{lemma}{[Based on Ref.~\cite[Lemma.~7.33]{greatTextbook}]}
    \label{weakyMeans11}
    For any weakly $C$-paddable language, $\mathcal{L}_{\textit{pad}}$, and any language, $\mathcal{L}_r$,
    \begin{align}
        \mathcal{L}_r \leq^C_m  \mathcal{L}_{\textit{pad}}
        \iff
        \mathcal{L}_r \leq^C_{1} \mathcal{L}_{\textit{pad}}.
    \end{align}
\end{lemma}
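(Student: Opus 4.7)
The plan is to dispense with the forward direction ($\Leftarrow$) immediately, since any one-one reduction is by definition a many-one reduction, and then focus on the substantive direction ($\Rightarrow$): given a $C$-computable many-one reduction $f : \Sigma^* \to \Sigma^*$ witnessing $\mathcal{L}_r \leq^C_m \mathcal{L}_{\textit{pad}}$, I must manufacture a $C$-computable one-one reduction.

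First I would unpack the hypothesis that $\mathcal{L}_{\textit{pad}}$ is weakly $C$-paddable: by Def.~\ref{def:weakPadd} there is a $C$-computable injection $p : \Sigma^* \times \Sigma^* \to \Sigma^*$ with the property that $p(x,y) \in \mathcal{L}_{\textit{pad}} \iff x \in \mathcal{L}_{\textit{pad}}$, and with the second coordinate ranging freely over $\Sigma^*$. Intuitively, $p$ lets me staple an arbitrary ``tag'' $y$ onto a candidate $x$ without disturbing its membership in $\mathcal{L}_{\textit{pad}}$, while keeping distinct $(x,y)$ pairs distinct.

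Next I would define the candidate one-one reduction $g : \Sigma^* \to \Sigma^*$ by
\begin{align}
g(z) \;=\; p\bigl(f(z),\, z\bigr).
\end{align}
Correctness is immediate from chaining the two equivalences: $z \in \mathcal{L}_r \iff f(z) \in \mathcal{L}_{\textit{pad}} \iff p(f(z),z) \in \mathcal{L}_{\textit{pad}} \iff g(z) \in \mathcal{L}_{\textit{pad}}$. Injectivity is the decisive payoff of carrying $z$ itself in the second slot: if $g(z_1) = g(z_2)$ then $p(f(z_1), z_1) = p(f(z_2), z_2)$, and injectivity of $p$ forces $(f(z_1), z_1) = (f(z_2), z_2)$, hence $z_1 = z_2$. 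This works even though $f$ may collapse many inputs to the same image; the tag $z$ restores distinctness.

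Finally I would check the complexity: $g$ is a composition of $f$, the pairing map, and $p$, each of which is $C$-computable by hypothesis, so $g$ is itself $C$-computable provided $C$ is closed under composition and contains the trivial copying operation (which is true of every complexity class considered in this paper). The only minor obstacle is fixing a pairing convention on $\Sigma^* \times \Sigma^*$ compatible with the encoding used in Def.~\ref{def:weakPadd}, but any standard $C$-computable bijective pairing suffices, so no genuine difficulty arises. The construction is therefore complete, and the lemma follows.
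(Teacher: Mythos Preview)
Your proof is correct and follows essentially the same approach as the paper: both define the one-one reduction by composing the given many-one reduction with the weak-padding injection, using the original input as the second coordinate to force injectivity, and both verify $C$-computability by closure under composition. Your version is slightly more explicit in checking that membership is preserved, which the paper leaves implicit, but the argument is the same.
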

\begin{proof}
    The $\Leftarrow$ direction follows from one-one reductions being valid many-to-one reductions.\\
    I then turn to the $\Rightarrow$ direction. As $\mathcal{L}_{\textit{pad}}$ is weakly $C$-paddable, $\exists \bold{f}: \Sigma^* \rightarrow \Sigma^*$ such that:
    \begin{align}
        \mathcal{L}_{\textit{pad}} \times \Sigma^* \leq^{C}_{1} \mathcal{L}_{\textit{pad}}
        \textit{ via } \bold{f}.
    \end{align}
    Assuming $\mathcal{L}_r \leq^C_m  \mathcal{L}_{\textit{pad}}$, $\exists \bold{g}: \Sigma^* \rightarrow \Sigma^*$ such that:
    \begin{align}
        \mathcal{L}_{r} \leq^{C}_{m} \mathcal{L}_{\textit{pad}}
        \textit{ via } \bold{g}.
    \end{align}
    I can then define a $\leq^C_{1}$-reduction from $\mathcal{L}_{r}$ to $\mathcal{L}_{\textit{pad}}$, using the mapping, $\bold{h}: \Sigma^* \rightarrow \Sigma^*$, $\forall x \in \Sigma^*$, which I define as:
    \begin{align}
        \bold{h}(x) 
        &=
        \bold{f} \big( \bold{g}(x), x \big).
    \end{align}
    $\bold{h}$ is $C$-computable as it is the composition of $C$-computations. It is also one-one as: assume, for the sake of contradiction $\exists x \not = y \in \Sigma^*$,
    \begin{align}
        \label{eqn:showingh11}
        \bold{h}(x) = \bold{h}(y)
        \iff \bold{f} \big( \bold{g}(x), x \big) = \bold{f} \big( \bold{g}(y), y \big)
    \end{align}
    As $\bold{f}$ is one-one, Eqn.~\ref{eqn:showingh11} implies:
    \begin{align}
         \bold{g}(x) =  \bold{g}(y) \text{ and } x = y \Rightarrow \bot.
    \end{align}
    Therefore, $\bold{h}$ must be one-one. Combining this with $\bold{h}$ being $C$-computable gives:
    \begin{align}
        \mathcal{L}_r \leq^C_{1} \mathcal{L}_{\textit{pad}}.
    \end{align}
\end{proof}
With the necessary preparations completed, I turn to applying them to sDCS, in Theorem~\ref{thm:11ReducTosDCS}.
\begin{theorem}
    \label{thm:11ReducTosDCS}
    Every language in BQP has a one-one polynomial time reduction to sDCS.
\end{theorem}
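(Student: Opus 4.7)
The plan is to combine a direct compilation of BQP circuits into sDCS with the weak-paddability machinery of Lemma~\ref{weakyMeans11}. First I would establish the many-one half: given a language $\mathcal{L} \in$ BQP and input $x$, invoke Def.~\ref{def:BQPvia circuits} to obtain, in polynomial classical time, a quantum circuit that decides $x \in \mathcal{L}$ with error at most $1/3$. From this build a circuit that starts in $\vert 0^n \rangle$, applies $X$ gates to prepare the intended input register $\vert x 0^{n - \vert x \vert} \rangle$, then runs the BQP circuit, and finally relabels qubits so the designated answer qubit is the last one. Compile the result into the fixed sDCS gate set (absorbing any inverse-polynomial Solovay--Kitaev approximation into the $1/3$ gap), then encode it as a string via Algorithm~\ref{11MappingAlgorithm}. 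The resulting map $\mathbf{g}$ is polynomial-time computable and satisfies $x \in \mathcal{L} \iff \mathbf{g}(x) \in \mathcal{L}_{\textit{sDCS}}$, giving $\mathcal{L} \leq^P_m \mathcal{L}_{\textit{sDCS}}$.

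Next I would show $\mathcal{L}_{\textit{sDCS}}$ is weakly P-paddable. Given $(x,y) \in \Sigma^* \times \Sigma^*$, run Algorithm~\ref{circuitExtractingAlgorithm} on $x$ to obtain a circuit $D$ on $n$ qubits (sending any invalid $x$ to a fixed garbage string outside $\mathcal{L}_{\textit{sDCS}}$). Build a new circuit $D^{(y)}$ on $n+1$ qubits by shifting every qubit index of $D$ up by one and using the vacated qubit $0$ to carry a prefix-free identity encoding of $y$: three pairwise-distinguishable single-qubit identity words, for example $XX$ for bit $1$, $ZZ$ for bit $0$, and $YY$ as an end-of-message marker (whose global phase is immaterial to measurement probabilities), are available in every reasonable sDCS gate set. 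Re-encode $D^{(y)}$ via Algorithm~\ref{11MappingAlgorithm} to produce $\mathbf{f}(x,y)$. Because the added gates act as identity and the original answer qubit has merely been relabelled, $\mathbf{f}(x,y) \in \mathcal{L}_{\textit{sDCS}} \iff x \in \mathcal{L}_{\textit{sDCS}}$. Injectivity is built in: from $\mathbf{f}(x,y)$ one parses the qubit-$0$ gates up to the end marker to recover $y$, then deletes qubit $0$ and shifts indices back down to recover $D$ and hence $x$.

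With a many-one polynomial-time reduction from $\mathcal{L}$ to $\mathcal{L}_{\textit{sDCS}}$ and weak P-paddability of $\mathcal{L}_{\textit{sDCS}}$ in hand, Lemma~\ref{weakyMeans11} immediately upgrades the former to the one-one polynomial-time reduction $\mathcal{L} \leq^P_1 \mathcal{L}_{\textit{sDCS}}$ claimed by the theorem. The hard part will be the injectivity step of the paddability argument: naively reusing the padding function of Lemma~\ref{lem:circuitPaddability} does not suffice, because that function begins by stripping any pre-existing embedded message and so can collapse distinct circuits to the same output. Dedicating a fresh qubit at a canonical index together with an unambiguous terminator sidesteps this, and carefully verifying that neither the answer-qubit statistics nor the number-of-qubits accounting is disturbed by the shift is the most detail-heavy piece of the plan.
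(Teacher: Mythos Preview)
Your plan is structurally the paper's own: obtain $\mathcal{L} \leq^P_m \mathcal{L}_{\textit{sDCS}}$ from promiseBQP-completeness of sDCS, establish that $\mathcal{L}_{\textit{sDCS}}$ is weakly P-paddable, and then invoke Lemma~\ref{weakyMeans11}. The paper's proof is two lines: it simply asserts BQP-completeness of sDCS and obtains weak paddability as a consequence of the full paddability proved in Theorem~\ref{mainTheorem} (via Lemma~\ref{lem:circuitPaddability}). You instead spell out both ingredients explicitly, and for weak paddability you build a bespoke one-one map by adjoining a fresh qubit at index~$0$ and writing a terminated identity-word encoding of $y$ on it. That more explicit construction is not just a stylistic choice: your worry that the padding function of Lemma~\ref{lem:circuitPaddability} ``strips any pre-existing embedded message'' and hence need not be injective in its first argument is well taken, since the paper's definition of $C$-paddable does not demand injectivity of $\textit{Pad}_{\mathcal{L}}$ and the implication ``paddable $\Rightarrow$ weakly paddable'' is only automatic under the textbook convention where it does. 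Your fresh-qubit-plus-terminator device cleanly sidesteps that subtlety while leaving the last-qubit answer statistics untouched, so your route is a sound and slightly more careful variant of the paper's argument.
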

\begin{proof}
    As sDCS is paddable, via Theorem~\ref{mainTheorem}, it is weakly paddable. As sDCS is complete for promiseBQP, $\forall \mathcal{L}_{bqp} \in$ promiseBQP, $\mathcal{L}_{bqp} \leq^{P}_{m}$ sDCS.
    Therefore, via Lemma \ref{weakyMeans11}, $\forall \mathcal{L}_{bqp} \in$ promiseBQP, $\mathcal{L}_{bqp} \leq^{P}_{1}$ sDCS.
\end{proof}
Theorem~\ref{thm:11ReducTosDCS} then provides a corollary that takes a step back and looks at Theorem~\ref{thm:11ReducTosDCS} from a more complexity theoretic perspective. But it requires a definition first.
\begin{definition}
    A language, $\mathcal{L}_A$, is \underline{complete} for a complexity class if every language in the complexity class can be reduced, in time bounded by a polynomial function of the input string's length, to $\mathcal{L}_A$.
\end{definition}
I can then present the final result of this paper, Corollary~\ref{cor:BQPComplete11Reducible}.
\begin{corollary}
    \label{cor:BQPComplete11Reducible}
    There exists a (promise) BQP-Complete language that every language in (promise) BQP is one-one polynomial time reducible to and is not P-bi-immune for infinitely many promises. 
\end{corollary}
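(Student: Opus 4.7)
The plan is to take sDCS itself as the witness language and stitch together the three results immediately preceding the corollary. First I would observe that sDCS is promiseBQP-complete --- a fact the paper has already appealed to inside the proof of Theorem~\ref{thm:11ReducTosDCS} --- so sDCS satisfies the completeness clause of the corollary by definition, and the candidate language is not in any doubt.

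For the one-one reducibility clause I would simply invoke Theorem~\ref{thm:11ReducTosDCS}, which already asserts that every language in (promise) BQP admits a one-one polynomial time reduction to sDCS; no additional argument is needed here, this step is purely bookkeeping.

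The remaining, and substantive, task is the ``not P-bi-immune for infinitely many promises'' clause. Here I would appeal to Theorem~\ref{mainTheorem}, which exhibits six families of restrictions on sDCS, each of which defines a promise under which sDCS is not P-bi-immune, and which may be combined freely. To upgrade this into \emph{infinitely many} pairwise-distinct promises I would pick a concrete infinite subfamily falling under item 3 of that theorem --- for example, the promises $\mathbb{P}_n := \{ x \in \Sigma^* : x \text{ encodes a circuit whose qubit count is divisible by } n \}$ indexed by $n \in \mathbb{N}$. Each $\mathbb{P}_n$ is of a form permitted by Theorem~\ref{mainTheorem}, so sDCS restricted to $\mathbb{P}_n$ is not P-bi-immune, and the $\mathbb{P}_n$ are pairwise distinct as subsets of $\Sigma^*$, giving an infinite family of qualifying promises.

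The main obstacle I anticipate is the technical bookkeeping of verifying, for the explicit family chosen, that (i) each $\mathbb{P}_n$ is closed under the padding function of Lemma~\ref{lem:circuitPaddability} so that Lemma~\ref{lem:paddMeansNoCBiImmune} really applies, and (ii) the members of the family are genuinely distinct as sets rather than just as descriptions. Both points are handled cleanly by the divisibility-based choice above: padding operates by inserting single-qubit identity implementations at interstitial spaces, which does not alter the qubit count, so each $\mathbb{P}_n$ is automatically closed under padding; and different divisors trivially determine different subsets of encoded circuits, giving the required infinitude.
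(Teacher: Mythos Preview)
Your proposal is correct and follows the same approach as the paper: take sDCS as the witness, cite Theorem~\ref{thm:11ReducTosDCS} for the one-one reducibility clause, and cite Theorem~\ref{mainTheorem} for the non-P-bi-immunity under infinitely many promises. The paper's own proof is in fact terser than yours --- it simply asserts that Theorem~\ref{mainTheorem} yields infinitely many qualifying promises without exhibiting an explicit family --- so your divisibility-based construction and the closure-under-padding check are additional rigor rather than a different route.
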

\begin{proof}
    By Theorem~\ref{thm:11ReducTosDCS}, all (promise) BQP languages are one-one reducible to sDCS. Then Theorem~\ref{mainTheorem} implies that this same problem (sDCS) is not P-bi-immune for infinitely many promises.
\end{proof}

\section{Discussion}
In this paper, I have examined - via a defined problem, sDCS - the P-bi-immunity of the simulation of decision problems in quantum computing. This has given an insight into the relation between the complexity classes P and BQP, and showed that there free infinitely many promises with an infinitely large subset of problem instances that can be decided classically in polynomial time. 
Further to this, I also showed that the problem used to examine P-bi-immunity also has the property that every language in (promise) BQP has a one-one reduction to it, which may be enabling in future studies.

I have listed out - in Theorem~\ref{mainTheorem} - several aspects of quantum circuits that can be restricted arbitrarily without affecting the ability to pad the input circuits, and hence these aspects can be restricted arbitrarily without removing the infinite set of classically solvable instances. However, these are likely not all the restrictions what leave infinitely many classically solvable instances. It therefore remains for future work to identify further such restrictions that do not make the problem P-bi-immune, and - perhaps - characterize the infinite classically decidable instances that result.

\section{Acknowledgements}
This work was supported, in part, 
by a EPSRC IAA grant (G.PXAD.0702.EXP) and the UKRI ExCALIBUR project QEVEC (EP/W00772X/2). I would also like to thank Katarzyna Macieszczak and Viv Kendon for comments that inspired this project. 

\bibliography{References}

\begin{thebibliography}{29}%
\makeatletter
\providecommand \@ifxundefined [1]{%
 \@ifx{#1\undefined}
}%
\providecommand \@ifnum [1]{%
 \ifnum #1\expandafter \@firstoftwo
 \else \expandafter \@secondoftwo
 \fi
}%
\providecommand \@ifx [1]{%
 \ifx #1\expandafter \@firstoftwo
 \else \expandafter \@secondoftwo
 \fi
}%
\providecommand \natexlab [1]{#1}%
\providecommand \enquote  [1]{``#1''}%
\providecommand \bibnamefont  [1]{#1}%
\providecommand \bibfnamefont [1]{#1}%
\providecommand \citenamefont [1]{#1}%
\providecommand \href@noop [0]{\@secondoftwo}%
\providecommand \href [0]{\begingroup \@sanitize@url \@href}%
\providecommand \@href[1]{\@@startlink{#1}\@@href}%
\providecommand \@@href[1]{\endgroup#1\@@endlink}%
\providecommand \@sanitize@url [0]{\catcode `\\12\catcode `\$12\catcode
  `\&12\catcode `\#12\catcode `\^12\catcode `\_12\catcode `\%12\relax}%
\providecommand \@@startlink[1]{}%
\providecommand \@@endlink[0]{}%
\providecommand \url  [0]{\begingroup\@sanitize@url \@url }%
\providecommand \@url [1]{\endgroup\@href {#1}{\urlprefix }}%
\providecommand \urlprefix  [0]{URL }%
\providecommand \Eprint [0]{\href }%
\providecommand \doibase [0]{http://dx.doi.org/}%
\providecommand \selectlanguage [0]{\@gobble}%
\providecommand \bibinfo  [0]{\@secondoftwo}%
\providecommand \bibfield  [0]{\@secondoftwo}%
\providecommand \translation [1]{[#1]}%
\providecommand \BibitemOpen [0]{}%
\providecommand \bibitemStop [0]{}%
\providecommand \bibitemNoStop [0]{.\EOS\space}%
\providecommand \EOS [0]{\spacefactor3000\relax}%
\providecommand \BibitemShut  [1]{\csname bibitem#1\endcsname}%
\let\auto@bib@innerbib\@empty
\bibitem [{\citenamefont {Bernstein}\ and\ \citenamefont
  {Vazirani}(1997)}]{doi:10.1137/S0097539796300921}%
  \BibitemOpen
  \bibfield  {author} {\bibinfo {author} {\bibfnamefont {E.}~\bibnamefont
  {Bernstein}}\ and\ \bibinfo {author} {\bibfnamefont {U.}~\bibnamefont
  {Vazirani}},\ }\href {\doibase 10.1137/S0097539796300921} {\bibfield
  {journal} {\bibinfo  {journal} {SIAM Journal on Computing}\ }\textbf
  {\bibinfo {volume} {26}},\ \bibinfo {pages} {1411} (\bibinfo {year}
  {1997})},\ \Eprint
  {http://arxiv.org/abs/https://doi.org/10.1137/S0097539796300921}
  {https://doi.org/10.1137/S0097539796300921} \BibitemShut {NoStop}%
\bibitem [{\citenamefont {Bogobowicz}\ \emph {et~al.}(2023)\citenamefont
  {Bogobowicz}, \citenamefont {Gao}, \citenamefont {Masiowski}, \citenamefont
  {Mohr}, \citenamefont {Soller}, \citenamefont {Zemmel},\ and\ \citenamefont
  {Zesko}}]{Bogobowicz_Gao_Masiowski_Mohr_Soller_Zemmel_Zesko_2023}%
  \BibitemOpen
  \bibfield  {author} {\bibinfo {author} {\bibfnamefont {M.}~\bibnamefont
  {Bogobowicz}}, \bibinfo {author} {\bibfnamefont {S.}~\bibnamefont {Gao}},
  \bibinfo {author} {\bibfnamefont {M.}~\bibnamefont {Masiowski}}, \bibinfo
  {author} {\bibfnamefont {N.}~\bibnamefont {Mohr}}, \bibinfo {author}
  {\bibfnamefont {H.}~\bibnamefont {Soller}}, \bibinfo {author} {\bibfnamefont
  {R.}~\bibnamefont {Zemmel}}, \ and\ \bibinfo {author} {\bibfnamefont
  {M.}~\bibnamefont {Zesko}},\ }\href
  {https://www.mckinsey.com/capabilities/mckinsey-digital/our-insights/quantum-technology-sees-record-investments-progress-on-talent-gap}
  {\enquote {\bibinfo {title} {{Quantum Technology Sees Record Investments,
  Progress on Talent Gap}},}\ } (\bibinfo {year} {2023})\BibitemShut {NoStop}%
\bibitem [{\citenamefont {Ten~Holter}\ \emph {et~al.}(2024)\citenamefont
  {Ten~Holter}, \citenamefont {Inglesant}, \citenamefont {Pijselman},\ and\
  \citenamefont {Jirotka}}]{whitePaper}%
  \BibitemOpen
  \bibfield  {author} {\bibinfo {author} {\bibfnamefont {C.}~\bibnamefont
  {Ten~Holter}}, \bibinfo {author} {\bibfnamefont {P.}~\bibnamefont
  {Inglesant}}, \bibinfo {author} {\bibfnamefont {M.}~\bibnamefont
  {Pijselman}}, \ and\ \bibinfo {author} {\bibfnamefont {M.}~\bibnamefont
  {Jirotka}},\ }\href
  {https://assets.ey.com/content/dam/ey-sites/ey-com/en_uk/topics/consulting/ey_oxford_uni_whitepaper_quantum_ethics_05_2024.pdf}
  {\emph {\bibinfo {title} {{Towards responsible quantum computing }}}},\
  \bibinfo {type} {Tech. Rep.}\ (\bibinfo  {institution} {The University of
  Oxford and EY},\ \bibinfo {year} {2024})\BibitemShut {NoStop}%
\bibitem [{\citenamefont {Bova}\ \emph {et~al.}(2021)\citenamefont {Bova},
  \citenamefont {Goldfarb},\ and\ \citenamefont {Melko}}]{Bova2021}%
  \BibitemOpen
  \bibfield  {author} {\bibinfo {author} {\bibfnamefont {F.}~\bibnamefont
  {Bova}}, \bibinfo {author} {\bibfnamefont {A.}~\bibnamefont {Goldfarb}}, \
  and\ \bibinfo {author} {\bibfnamefont {R.~G.}\ \bibnamefont {Melko}},\ }\href
  {\doibase 10.1140/epjqt/s40507-021-00091-1} {\bibfield  {journal} {\bibinfo
  {journal} {EPJ Quantum Technology}\ }\textbf {\bibinfo {volume} {8}},\
  \bibinfo {pages} {2} (\bibinfo {year} {2021})}\BibitemShut {NoStop}%
\bibitem [{\citenamefont {Adleman}\ \emph {et~al.}(1997)\citenamefont
  {Adleman}, \citenamefont {DeMarrais},\ and\ \citenamefont
  {Huang}}]{doi:10.1137/S0097539795293639}%
  \BibitemOpen
  \bibfield  {author} {\bibinfo {author} {\bibfnamefont {L.~M.}\ \bibnamefont
  {Adleman}}, \bibinfo {author} {\bibfnamefont {J.}~\bibnamefont {DeMarrais}},
  \ and\ \bibinfo {author} {\bibfnamefont {M.-D.~A.}\ \bibnamefont {Huang}},\
  }\href {\doibase 10.1137/S0097539795293639} {\bibfield  {journal} {\bibinfo
  {journal} {SIAM Journal on Computing}\ }\textbf {\bibinfo {volume} {26}},\
  \bibinfo {pages} {1524} (\bibinfo {year} {1997})}\BibitemShut {NoStop}%
\bibitem [{\citenamefont {Aaronson}(2010)}]{10.1145/1806689.1806711}%
  \BibitemOpen
  \bibfield  {author} {\bibinfo {author} {\bibfnamefont {S.}~\bibnamefont
  {Aaronson}},\ }in\ \href {\doibase 10.1145/1806689.1806711} {\emph {\bibinfo
  {booktitle} {{Proceedings of the Forty-Second ACM Symposium on Theory of
  Computing}}}},\ \bibinfo {series and number} {STOC '10}\ (\bibinfo
  {publisher} {Association for Computing Machinery},\ \bibinfo {address} {New
  York, NY, USA},\ \bibinfo {year} {2010})\ p.\ \bibinfo {pages}
  {141–150}\BibitemShut {NoStop}%
\bibitem [{\citenamefont {J.}\ \emph {et~al.}(2011)\citenamefont {J.},
  \citenamefont {Jozsa},\ and\ \citenamefont
  {Shepherd}}]{classSimIsPolyCollapse}%
  \BibitemOpen
  \bibfield  {author} {\bibinfo {author} {\bibfnamefont {B.~M.}\ \bibnamefont
  {J.}}, \bibinfo {author} {\bibfnamefont {R.}~\bibnamefont {Jozsa}}, \ and\
  \bibinfo {author} {\bibfnamefont {D.~J.}\ \bibnamefont {Shepherd}},\ }\href
  {\doibase 10.1098/rspa.2010.0301} {\bibfield  {journal} {\bibinfo  {journal}
  {Proceedings of the Royal Society A}\ }\textbf {\bibinfo {volume} {467}},\
  \bibinfo {pages} {459–472} (\bibinfo {year} {2011})}\BibitemShut {NoStop}%
\bibitem [{\citenamefont {Aaronson}\ and\ \citenamefont
  {Arkhipov}(2011)}]{10.1145/1993636.1993682}%
  \BibitemOpen
  \bibfield  {author} {\bibinfo {author} {\bibfnamefont {S.}~\bibnamefont
  {Aaronson}}\ and\ \bibinfo {author} {\bibfnamefont {A.}~\bibnamefont
  {Arkhipov}},\ }in\ \href {\doibase 10.1145/1993636.1993682} {\emph {\bibinfo
  {booktitle} {Proceedings of the Forty-Third Annual ACM Symposium on Theory of
  Computing}}},\ \bibinfo {series and number} {STOC '11}\ (\bibinfo
  {publisher} {Association for Computing Machinery},\ \bibinfo {address} {New
  York, NY, USA},\ \bibinfo {year} {2011})\ p.\ \bibinfo {pages}
  {333–342}\BibitemShut {NoStop}%
\bibitem [{Note1()}]{Note1}%
  \BibitemOpen
  \bibinfo {note} {Although oracle results can be misleading~\cite
  {10.1145/146585.146609, CHANG199424}}\BibitemShut {NoStop}%
\bibitem [{\citenamefont {Raz}\ and\ \citenamefont
  {Tal}(2019)}]{10.1145/3313276.3316315}%
  \BibitemOpen
  \bibfield  {author} {\bibinfo {author} {\bibfnamefont {R.}~\bibnamefont
  {Raz}}\ and\ \bibinfo {author} {\bibfnamefont {A.}~\bibnamefont {Tal}},\ }in\
  \href {\doibase 10.1145/3313276.3316315} {\emph {\bibinfo {booktitle}
  {Proceedings of the 51st Annual ACM SIGACT Symposium on Theory of
  Computing}}},\ \bibinfo {series and number} {STOC 2019}\ (\bibinfo
  {publisher} {Association for Computing Machinery},\ \bibinfo {address} {New
  York, NY, USA},\ \bibinfo {year} {2019})\ p.\ \bibinfo {pages}
  {13–23}\BibitemShut {NoStop}%
\bibitem [{\citenamefont {Simon}(1997)}]{doi:10.1137/S0097539796298637}%
  \BibitemOpen
  \bibfield  {author} {\bibinfo {author} {\bibfnamefont {D.~R.}\ \bibnamefont
  {Simon}},\ }\href {\doibase 10.1137/S0097539796298637} {\bibfield  {journal}
  {\bibinfo  {journal} {SIAM Journal on Computing}\ }\textbf {\bibinfo {volume}
  {26}},\ \bibinfo {pages} {1474} (\bibinfo {year} {1997})}\BibitemShut
  {NoStop}%
\bibitem [{\citenamefont {Shor}(1999)}]{doi:10.1137/S0036144598347011}%
  \BibitemOpen
  \bibfield  {author} {\bibinfo {author} {\bibfnamefont {P.~W.}\ \bibnamefont
  {Shor}},\ }\href {\doibase 10.1137/S0036144598347011} {\bibfield  {journal}
  {\bibinfo  {journal} {SIAM Review}\ }\textbf {\bibinfo {volume} {41}},\
  \bibinfo {pages} {303} (\bibinfo {year} {1999})}\BibitemShut {NoStop}%
\bibitem [{Note2()}]{Note2}%
  \BibitemOpen
  \bibinfo {note} {Especially as classical equivalents of quantum algorithms
  are occasionally found\cite {10.1145/3313276.3316310, Tang2022,
  PhysRevA.107.012421, gall2023robust}, showing that we do not perfectly
  understand the source of quantum advantage.}\BibitemShut {Stop}%
\bibitem [{Note3()}]{Note3}%
  \BibitemOpen
  \bibinfo {note} {I.e. For every gate, $g$, in the circuit, for any string of
  Pauli gates, $P$, $gP g^{-1}$ is a string of Pauli gates.}\BibitemShut
  {Stop}%
\bibitem [{\citenamefont {Gottesman}(1998)}]{gottesman1998heisenberg}%
  \BibitemOpen
  \bibfield  {author} {\bibinfo {author} {\bibfnamefont {D.}~\bibnamefont
  {Gottesman}},\ }\href@noop {} {\enquote {\bibinfo {title} {{The Heisenberg
  Representation of Quantum Computers}},}\ } (\bibinfo {year} {1998}),\ \Eprint
  {http://arxiv.org/abs/quant-ph/9807006} {arXiv:quant-ph/9807006 [quant-ph]}
  \BibitemShut {NoStop}%
\bibitem [{\citenamefont {Aaronson}\ and\ \citenamefont
  {Gottesman}(2004)}]{Aaronson_2004}%
  \BibitemOpen
  \bibfield  {author} {\bibinfo {author} {\bibfnamefont {S.}~\bibnamefont
  {Aaronson}}\ and\ \bibinfo {author} {\bibfnamefont {D.}~\bibnamefont
  {Gottesman}},\ }\href {\doibase 10.1103/physreva.70.052328} {\bibfield
  {journal} {\bibinfo  {journal} {Physical Review A}\ }\textbf {\bibinfo
  {volume} {70}} (\bibinfo {year} {2004}),\
  10.1103/physreva.70.052328}\BibitemShut {NoStop}%
\bibitem [{\citenamefont {Anders}\ and\ \citenamefont
  {Briegel}(2006)}]{Anders_2006}%
  \BibitemOpen
  \bibfield  {author} {\bibinfo {author} {\bibfnamefont {S.}~\bibnamefont
  {Anders}}\ and\ \bibinfo {author} {\bibfnamefont {H.~J.}\ \bibnamefont
  {Briegel}},\ }\href {\doibase 10.1103/physreva.73.022334} {\bibfield
  {journal} {\bibinfo  {journal} {Physical Review A}\ }\textbf {\bibinfo
  {volume} {73}} (\bibinfo {year} {2006}),\
  10.1103/physreva.73.022334}\BibitemShut {NoStop}%
\bibitem [{Note4()}]{Note4}%
  \BibitemOpen
  \bibinfo {note} {Where $\protect \textit {Poly} \leavevmode@ifvmode {\setbox
  \z@ \hbox {\mathsurround \z@ $\nulldelimiterspace \z@ \left (\vcenter to\@ne
  \big@size {}\right .$}\box \z@ }\vert s \vert \leavevmode@ifvmode {\setbox
  \z@ \hbox {\mathsurround \z@ $\nulldelimiterspace \z@ \left )\vcenter to\@ne
  \big@size {}\right .$}\box \z@ }$ qubits denotes that there exists some
  polynomial - a, potentially, different one for each instance of the problem
  bounding the number of qubits.}\BibitemShut {Stop}%
\bibitem [{\citenamefont {Deutsch}\ and\ \citenamefont
  {Penrose}(1985)}]{doi:10.1098/rspa.1985.0070}%
  \BibitemOpen
  \bibfield  {author} {\bibinfo {author} {\bibfnamefont {D.}~\bibnamefont
  {Deutsch}}\ and\ \bibinfo {author} {\bibfnamefont {R.}~\bibnamefont
  {Penrose}},\ }\href {\doibase 10.1098/rspa.1985.0070} {\bibfield  {journal}
  {\bibinfo  {journal} {Proceedings of the Royal Society of London. A.
  Mathematical and Physical Sciences}\ }\textbf {\bibinfo {volume} {400}},\
  \bibinfo {pages} {97} (\bibinfo {year} {1985})}\BibitemShut {NoStop}%
\bibitem [{Note5()}]{Note5}%
  \BibitemOpen
  \bibinfo {note} {Meaning each circuit of a given size can be constructed in
  polynomial time by a classical Turing machine.}\BibitemShut {Stop}%
\bibitem [{\citenamefont {Mayordomo}(1994)}]{MAYORDOMO1994487}%
  \BibitemOpen
  \bibfield  {author} {\bibinfo {author} {\bibfnamefont {E.}~\bibnamefont
  {Mayordomo}},\ }\href {\doibase https://doi.org/10.1016/0304-3975(94)00023-C}
  {\bibfield  {journal} {\bibinfo  {journal} {Theoretical Computer Science}\
  }\textbf {\bibinfo {volume} {136}},\ \bibinfo {pages} {487} (\bibinfo {year}
  {1994})}\BibitemShut {NoStop}%
\bibitem [{Note6()}]{Note6}%
  \BibitemOpen
  \bibinfo {note} {Meaning that the correct solution is not always either
  reject or accept for all instances. A language is said to be trivial if and
  only if all possible inputs - i.e. those inputs in the promise - are either
  all in the language or all not in the language.}\BibitemShut {Stop}%
\bibitem [{\citenamefont {Du}\ and\ \citenamefont {Ko}(2014)}]{greatTextbook}%
  \BibitemOpen
  \bibfield  {author} {\bibinfo {author} {\bibfnamefont {D.-Z.}\ \bibnamefont
  {Du}}\ and\ \bibinfo {author} {\bibfnamefont {K.-I.}\ \bibnamefont {Ko}},\
  }\href@noop {} {\emph {\bibinfo {title} {Theory of Computational
  Complexity}}}\ (\bibinfo  {publisher} {Wiley},\ \bibinfo {year}
  {2014})\BibitemShut {NoStop}%
\bibitem [{\citenamefont {Shamir}(1992)}]{10.1145/146585.146609}%
  \BibitemOpen
  \bibfield  {author} {\bibinfo {author} {\bibfnamefont {A.}~\bibnamefont
  {Shamir}},\ }\href {\doibase 10.1145/146585.146609} {\bibfield  {journal}
  {\bibinfo  {journal} {J. ACM}\ }\textbf {\bibinfo {volume} {39}},\ \bibinfo
  {pages} {869–877} (\bibinfo {year} {1992})}\BibitemShut {NoStop}%
\bibitem [{\citenamefont {Chang}\ \emph {et~al.}(1994)\citenamefont {Chang},
  \citenamefont {Chor}, \citenamefont {Goldreich}, \citenamefont {Hartmanis},
  \citenamefont {Håstad}, \citenamefont {Ranjan},\ and\ \citenamefont
  {Rohatgi}}]{CHANG199424}%
  \BibitemOpen
  \bibfield  {author} {\bibinfo {author} {\bibfnamefont {R.}~\bibnamefont
  {Chang}}, \bibinfo {author} {\bibfnamefont {B.}~\bibnamefont {Chor}},
  \bibinfo {author} {\bibfnamefont {O.}~\bibnamefont {Goldreich}}, \bibinfo
  {author} {\bibfnamefont {J.}~\bibnamefont {Hartmanis}}, \bibinfo {author}
  {\bibfnamefont {J.}~\bibnamefont {Håstad}}, \bibinfo {author} {\bibfnamefont
  {D.}~\bibnamefont {Ranjan}}, \ and\ \bibinfo {author} {\bibfnamefont
  {P.}~\bibnamefont {Rohatgi}},\ }\href {\doibase
  https://doi.org/10.1016/S0022-0000(05)80084-4} {\bibfield  {journal}
  {\bibinfo  {journal} {Journal of Computer and System Sciences}\ }\textbf
  {\bibinfo {volume} {49}},\ \bibinfo {pages} {24} (\bibinfo {year}
  {1994})}\BibitemShut {NoStop}%
\bibitem [{\citenamefont {Tang}(2019)}]{10.1145/3313276.3316310}%
  \BibitemOpen
  \bibfield  {author} {\bibinfo {author} {\bibfnamefont {E.}~\bibnamefont
  {Tang}},\ }in\ \href {\doibase 10.1145/3313276.3316310} {\emph {\bibinfo
  {booktitle} {Proceedings of the 51st Annual ACM SIGACT Symposium on Theory of
  Computing}}},\ \bibinfo {series and number} {STOC 2019}\ (\bibinfo
  {publisher} {Association for Computing Machinery},\ \bibinfo {address} {New
  York, NY, USA},\ \bibinfo {year} {2019})\ p.\ \bibinfo {pages}
  {217–228}\BibitemShut {NoStop}%
\bibitem [{\citenamefont {Tang}(2022)}]{Tang2022}%
  \BibitemOpen
  \bibfield  {author} {\bibinfo {author} {\bibfnamefont {E.}~\bibnamefont
  {Tang}},\ }\href {\doibase 10.1038/s42254-022-00511-w} {\bibfield  {journal}
  {\bibinfo  {journal} {Nature Reviews Physics}\ }\textbf {\bibinfo {volume}
  {4}},\ \bibinfo {pages} {692} (\bibinfo {year} {2022})}\BibitemShut {NoStop}%
\bibitem [{\citenamefont {Jackson}\ \emph {et~al.}(2023)\citenamefont
  {Jackson}, \citenamefont {Kapourniotis},\ and\ \citenamefont
  {Datta}}]{PhysRevA.107.012421}%
  \BibitemOpen
  \bibfield  {author} {\bibinfo {author} {\bibfnamefont {A.}~\bibnamefont
  {Jackson}}, \bibinfo {author} {\bibfnamefont {T.}~\bibnamefont
  {Kapourniotis}}, \ and\ \bibinfo {author} {\bibfnamefont {A.}~\bibnamefont
  {Datta}},\ }\href {\doibase 10.1103/PhysRevA.107.012421} {\bibfield
  {journal} {\bibinfo  {journal} {Phys. Rev. A}\ }\textbf {\bibinfo {volume}
  {107}},\ \bibinfo {pages} {012421} (\bibinfo {year} {2023})}\BibitemShut
  {NoStop}%
\bibitem [{\citenamefont {Gall}(2023)}]{gall2023robust}%
  \BibitemOpen
  \bibfield  {author} {\bibinfo {author} {\bibfnamefont {F.~L.}\ \bibnamefont
  {Gall}},\ }\href@noop {} {\enquote {\bibinfo {title} {Robust dequantization
  of the quantum singular value transformation and quantum machine learning
  algorithms},}\ } (\bibinfo {year} {2023}),\ \Eprint
  {http://arxiv.org/abs/2304.04932} {arXiv:2304.04932} \BibitemShut {NoStop}%
\end{thebibliography}%



\end{document}